\newcommand{\be}{\begin{equation}}
\newcommand{\ee}{\end{equation}}
\newcommand{\ba}{\begin{eqnarray}}
\newcommand{\ea}{\end{eqnarray}}
\newtheorem{theorem}{Theorem}
\newtheorem{definition}{Definition}
\newtheorem{observation}{Observation}
\newtheorem{lemma}{Lemma}
\newcommand{\tri}[3]{\ket{#1}\otimes\ket{#2}\otimes\ket{#3}}
\newcommand{\R}{\mathbb{R}}
\def\>{\rangle}
\def\<{\langle}
\begin{document}

\title{Nonlocality Without Entanglement: Quantum Theory and Beyond}

\author{Some Sankar Bhattacharya}
\affiliation{Department of Computer Science, The University of Hong Kong, Pokfulam Road, Hong Kong.}

\author{Sutapa Saha}
\affiliation{Physics and Applied Mathematics Unit, Indian Statistical Institute, 203 B.T. Road, Kolkata-700108, India.}

\author{Tamal Guha}
\affiliation{Physics and Applied Mathematics Unit, Indian Statistical Institute, 203 B.T. Road, Kolkata-700108, India.}

\author{Manik Banik}
\affiliation{S.N. Bose National Center for Basic Sciences, Block JD, Sector III, Salt Lake, Kolkata 700098, India.}

\begin{abstract}
{\it Quantum nonlocality without entanglement} (Q-NWE) encapsulates nonlocal behavior of multipartite product states as they may entail global operation for optimal decoding of the classical information encoded within. Here we show that the phenomena of NWE is not specific to quantum theory only, rather a class of generalized probabilistic theories can exhibit such behavior. In fact several manifestations of NWE,  e.g., asymmetric local discrimination, suboptimal local discrimination, notion of separable but locally unimplementable measurement arise generically in operational theories other than quantum theory. We propose a framework to compare the strength of NWE in different theories and show that such behavior in quantum theory is limited, suggesting a specific topological feature of quantum theory, namely, the continuity of state space structure. Our work adds profound foundational appeal to the study of NWE phenomena along with its information theoretic relevance.        
\end{abstract}



\maketitle
\emph{Introduction.--} 
One of the most counter-intuitive aspects of quantum theory is its {\it nonlocal} behavior. John Bell in his seminal result showed that entangled states of composite quantum systems can result in correlations that do not allow any {\it local realistic} explanation \cite{Bell} (see \cite{Mermin93,Brunner14} for reviews on Bell nonlocality). Such correlations, however, are not available in its most strengthened form \cite{Cirelson80}. This limited behavior of Bell nonlocality has later been axiomatized in deriving quantum theory \cite{Popescu94} and subsequently motivates several computational and information theoretic primitives that aim to single out the correlations allowed in physical world \cite{vanDam,Brassard06,Linden07,Pawlowski09,Navascues09,Fritz13}.

Nonlocal behaviors of quantum theory, however, not always necessitate entanglement. In a pioneering work Bennett {\it et al.} recognized that multipartite quantum systems can exhibit nonlocal properties involving only product states in a way fundamentally different from Bell nonlocality \cite{Bennett99}. In particular, they constructed sets of product states that cannot be exactly discriminated using local operations and classical communication (LOCC) while mutual orthogonality assures their perfect global discrimination. The authors coined the term `quantum nonlocality without entanglement' (Q-NWE) for this phenomenon as the sets allow local preparation (with some preshared strategy) but prohibit perfect local discrimination. Putting differently, global measurement can be more efficient than LOCC for extracting classical information encoded in locally prepared ensemble of quantum states. Local indistinguishability has later been identified as crucial primitive for number of distributed quantum protocols, namely, quantum data hiding \cite{Terhal01,Eggeling02,Matthews09} and quantum secret sharing \cite{Markham08,Rahaman15,Wang17-0}; and it underlies the non-zero gap between single-shot and multi-shot classical capacities of noisy quantum channels \cite{Fuchs96}. On the foundational part, the recent Pusey-Barrett-Rudolph theorem uses such a nonlocal feature of nonorthogonal product states to establish {\it $\psi$-ontic} nature of quantum wave function \cite{Pusey12}.

Here we ask the question what quantum feature is indeed captured in `quantum nonlocality without entanglement'? More particularly we look for whether this phenomenon is specific to quantum theory or is it possible to devise generalized probabilistic models other than quantum mechanics that also manifest similar behavior. Quite surprisingly, like the Bell nonlocality case, here also we find affirmative answer. Recall that indistinguishability of pure states in quantum theory can be thought of as an artifact of Hilbert space structure as it entails nonorthogonality among the states; and the impossibility of perfect local discrimination of an orthogonal product basis, there, results in a {\it separable measurement} that cannot be implemented locally. The generalized probability theory (GPT) framework admits a broader mathematical set-up that includes quantum and classical theory as special examples. It can incorporate the notion of indistinguishable pure states without invoking the much constrained Hilbert space structure of quantum mechanics \cite{Banik19}. Here we show that this general framework can also exhibit nonzero gap in optimal success probabilities of product states discrimination under local and global protocols, {\it i.e.}, it evinces the NWE phenomena. In fact, we find that several aspects of NWE, as observed in quantum theory, are indeed feasible in this generalized framework. For instance, it is possible to have a set of bipartite product states that allows perfect local discrimination when one of the parties starts the protocol whereas the protocol fails for the other party, a fact already known in quantum theory \cite{Groisman01,Walgate02}. In the GPT framework, we then constructed sets of product states that cannot be perfectly discriminated by any local protocol whereas a global measurement serves the purpose exactly. Furthermore, all the effects constituting the perfect discrimination measurement are product effects ensuring the existence of separable but locally unimplementable measurement in GPT framework. This mimics the NWE phenomena as established in the seminal work of Bennett {\it et al.} \cite{Bennett99}. We propose a methodology to compare the strength of NWE in different theories and subsequently show that such behavior in quantum theory is limited in nature. Importantly, it turns out that this limited behavior of NWE in quantum theory is deeply linked with one of its topological feature. More particularly the limited NWE is caused due to the continuity of state space structure in quantum theory which assures existence of a continuous reversible transformation on a system between any two pure states \cite{Hardy01}. Our present work thus adds a deep foundational appeal to the study of NWE phenomena. In the following we start with a brief discussion on GPT framework.  

{\it Generalized probabilistic theory.--} This mathematical framework is broad enough to encapsulate all possible probabilistic theories that use the notion of states to yield the outcome probabilities of measurements (see the Appendix, and Refs. \cite{Hardy01,Barrett07,Chiribella10,Barnum11} for details of this framework). In a GPT, a system $S_{ys}\equiv(\varOmega, \mathcal{E})$ is associated with a set of states $\varOmega$ and a set of effects $\mathcal{E}$. Typically $\varOmega$ is considered to be a compact convex set embedded in the positive convex cone $V_+$ of some real vector space $V$ while $\mathcal{E}$ is embedded in the cone $V^*_+$ which is dual to the state cone $V_+$. An effect $e\in\mathcal{E}$ corresponds to a linear functional on $\varOmega$ that maps each state $\omega\in\varOmega$ onto a probability $p(e|\omega)$, representing successful filter of the effect $e$ on the state $\omega$. Collection of effects $\{e_i\}_i$ forms a measurement whenever $\sum_ie_i=u$, with $u$ being the unit effect such that $p(u|\omega)=1,~\forall~\omega\in\varOmega$. A preparation or state $\omega$ thus specifies outcome probabilities for all measurements that can be performed on it. Given two systems $S_{ys}(A)\equiv(\varOmega^A, \mathcal{E}^A)$ and $S_{ys}(B)\equiv(\varOmega^B, \mathcal{E}^B)$ the theory also delineates their composition. The composite system $S_{ys}(AB)\equiv(\varOmega^{AB}, \mathcal{E}^{AB})$ satisfies some natural conditions, such as no-signaling and local tomography \cite{Hardy01}, that narrow down the possibilities of such compositions and assure that $\varOmega^{AB}$ is embedded in the positive cone of $V^A\otimes V^B$ \cite{Namioka69,Wilce92,Barnum07}. 

Hilbert space description of quantum theory lies within this framework. State of the system is represented by a density operator $\rho\in\mathcal{D}(\mathcal{H})$ \cite{self2} while effects correspond to positive semi-definite operators on $\mathcal{H}$, where $\mathcal{H}$ is the Hilbert space associated with the system. The outcome probabilities follow the Born (trace) rule. The Hilbert space of a composite system consisting the subsystems $A_i$'s is given by $\bigotimes_i\mathcal{H}_{A_i}$, where $\mathcal{H}_{A_i}$ be the Hilbert space of $i^{th}$ subsystems.

Here we recall another class of GPTs, namely polygonal models $\mathcal{P}_{ly}(n)\equiv(\Omega(n),\mathcal{E}(n))$ \cite{Janotta11}. The state spaces $\Omega(n)$ for elementary systems are regular polygons with $n$ vertices. The states and effects are represented by vectors in $\mathbb{R}^3$ and $p(e|\omega)$ is given by usual Euclidean inner product. For a fixed $n$, $\Omega(n)$ is the convex hull of $n$ pure states $\{\omega_i\}_{i=0}^{n-1}$ with $\omega_i:=\left(r_n \cos(2 \pi i/n), r_n \sin(2 \pi i/n),1 \right)^T\in\mathbb{R}^3$; $T$ denotes transpose and $r_n:= \sqrt{\sec(\pi/n)}$. The unit effect is given by $u:=(0,0,1)^T$. The set $\mathcal{E}(n)$ of all possible measurement effects consists of convex hull of zero effect, unit effect, and the extremal effects $\{e_i,\bar{e}_i\}_{i=0}^{n-1}$, where $e_i:= \frac{1}{2}\left(r_n \cos((2 i+1) \pi/n),r_n \sin((2 i+1) \pi/n),1\right)^T$ for even $n$ and $e_i:=\frac{1}{1 + {r_n}^2}\left(r_n \cos(2 \pi i/n),r_n \sin(2 \pi i/n),1\right)^T$ for odd $n$ and $\bar{e}:=u-e$. This class of models has attracted considerable interest in the recent past \cite{Janotta13,Massar14,Safi15,Bhattacharya18}.   

A composite system allows the possibility of a state $\omega_{AB}\in\varOmega^{AB}$ that cannot be prepared as a statistical mixture of some product states, {\it i.e.} $\omega_{AB}\neq\sum_ip_i\omega_A^i\otimes\omega_B^i$ with $\{p_i\}_i$ being a probability distribution. Such states are called entangled states. Entangled effects are defined similarly. Whenever such entangled states and/or entangled effects are invoked in a GPT, they must satisfy the basic self-consistency (SC) condition -- any valid composition of systems, states, effects, and their transformations should produce non-negative conditional probabilities. However, mathematically several self-consistent compositions of the elementary system are possible while considering a multipartite system. For instance, bipartite composition of squre-bit model $\mathcal{P}_{ly}(4)$ allows four such nontrivial compositions -- (i) PR model, (ii) HS Model, (iii) Hybrid model, and (iv) Frozen model \cite{Dall'Arno17}. In quantum case also different, in fact infinite, self-consistent compositions are possible \cite{Namioka69,Wilce92,Barnum07,Arai19}. Among these only the quantum composite state space $\mathcal{D}(\mathcal{H}^{\otimes n})$ possesses the property of {\it self-duality} \cite{Birkhoff36}, which in the GPT framework has recently been derived from a computational primitive \cite{Muller12}. With this prelude we now move to the main part of this work.

{\it Nonlocality without entanglement.--} This particular phenomena is related to multipartite state discrimination problem under local protocols. In the GPT framework the task can be defined as follows. Suppose an $n$-partite state chosen randomly from an ensemble $\{p_i,\omega_{A_1\cdots A_n}^i\}_{i=1}^k$ is distributed among $n$ number of spatially separated parties who know the ensemble but not the exact state and aim to identify it given one copy of the system. However, there is severe restrictions on their action -- they can only perform operations on their respective parts of the composite system and can classically communicate with each other. In general, the ensemble can consist of entangled states, but when studying the NWE phenomena we will consider ensembles of product states only. Before proceeding further let us discuss a bit more about local operations. For that in the following we define separable measurements in GPT framework.
\begin{definition}\label{def1}
	Consider an $n$-partite system $S_{sy}(A_1\cdots A_n)\equiv(\varOmega^{A_1\cdots A_n},\mathcal{E}^{A_1\cdots A_n})$. A measurement $\mathcal{M}\equiv\{e_{A_1\cdots A_n}^i~|~e_{A_1\cdots A_n}^i\in\mathcal{E}^{A_1\cdots A_n}~\forall~i~\&~\sum_ie_{A_1\cdots A_n}^i=u_{A_1\cdots A_n}\}$ is called separable if $e_{A_1\cdots A_n}^i=e_{A_1}^i\otimes\cdots\otimes e_{A_n}^i$ for all $i$, where $e_{X}^i\in\mathcal{E}^X$ with $S_{sy}(X)\equiv(\varOmega^{X},\mathcal{E}^{X})$ being the subsystems for $X\in\{A_1,\cdots, A_n\}$.
\end{definition}
Quite surprisingly all separable quantum measurement are not locally implementable, {\it i.e.}, cannot be realized by LOCC \cite{Bennett99}. Generally a LOCC protocol consists of multi-round steps that make its mathematical characterization hard in quantum theory \cite{Chitambar14}. During such a protocol when some party performs a measurement on her/his subsystem and obtains some outcome then naturally the question arises how the given state gets updated. As noted in Ref. \cite{Chiribella19} any valid post-measurement update rule in a GPT must satisfy some basic consistency requirements imposed through Bayesian reasoning. While the von Neumann-Lüders rule in quantum theory is a consistent update rule, there is no natural way in an arbitrary GPT to come up with such a rule. Despite this we now show that several features of local state discrimination problem, as observed in quantum theory, have similar manifestations in GPT.
\begin{center}
	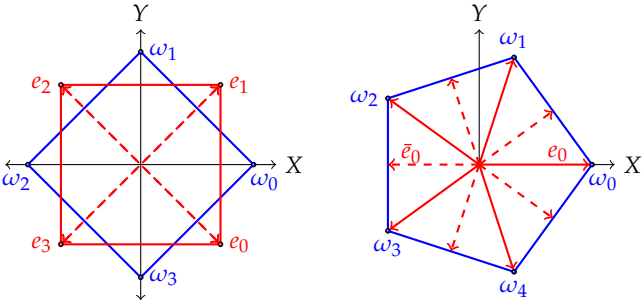
\begin{figure}[!t]
		\begin{tikzpicture}[scale=1.5]
		\newdimen\R
		\R=1cm
		\draw[<->](-1.2,0)--(1.2,0) node[right] {$X$};
		\draw[<->](0,-1.2)--(0,1.2) node[above] {$Y$};
		\draw [blue!100,<->,thick](0:\R) \foreach \x in {0,90,...,270} {-- (\x:\R)} 
		-- cycle (1.1,0) node[below] {$\omega_0$}
		-- cycle (90:\R) node[right] {$\omega_1$}
		-- cycle (-1.1,0) node[below] {$\omega_2$}
		-- cycle (270:\R) node[right] {$\omega_3$};
		\foreach \x in {0,90,...,270} {\draw[ball color=blue] (\x:1 cm) circle (.02cm);}
		\draw [red!100,<->,thick] (45:\R) \foreach \x in {45,135,...,315} {-- (\x:\R)}
		-- cycle (315:\R) node[right] {$e_0$}
		-- cycle (45:\R) node[right] {$e_1$}
		-- cycle (135:\R) node[left] {$e_2$}
		-- cycle (225:\R) node[left] {$e_3$};
		\foreach \x in {45,135,...,315} {\draw[red!100,<->,thick,dashed] (\x:.98 cm) -- (\x + 180:.98cm);
			\draw[ball color=red] (\x:1 cm) circle (.02cm);}
		
		\draw[xshift=3\R][->](1,0)--(1.2,0) node[right] {$X$};
		\draw[xshift=3\R][->](0,0)--(0,1.2) node[above] {$Y$};
		\draw [blue!100,<->,thick] [xshift=3\R](0:\R) \foreach \x in {0,72,...,288} {-- (\x:\R)} 
		-- cycle (1.1,0) node[below] {$\omega_0$}
		-- cycle (72:\R) node[above] {$\omega_1$}
		-- cycle (144:\R) node[left] {$\omega_2$}
		-- cycle (216:\R) node[below] {$\omega_3$}
		-- cycle (288:\R) node[below] {$\omega_4$};
		\foreach \x in {0,72,...,288} {\draw[xshift=3\R] [red!100,<-,thick](\x:.98cm) -- (\x + 180:0cm);
			\draw[xshift=3\R][red!100,->,thick,dashed](\x:0cm) -- (\x + 180:0.807 cm);
			\draw[xshift=3\R][ball color=blue] (\x:1 cm) circle (.02cm);}
		\node[text=red] at (3.7,.1) {$e_0$};
		\node[text=red] at (2.4,.1) {$\bar{e}_0$};
		\end{tikzpicture}
		\caption{[Color on-line] Polygonal models $\mathcal{P}_{ly}(4)$ (left) and $\mathcal{P}_{ly}(5)$ (right). Normalized state plane (at $z=1$) is depicted. Pentagon model is self-dual whereas squit is not. In both models $\omega_i$'s represent pure states. In squit model there are only two extremal measurements, $\mathcal{M}_i\equiv\{e_i,e_{i+1}|e_i+e_{i+1}=u\}$ with $i\in\{0,1\}$, whereas the other has five extremal measurements, $\mathcal{M}_i\equiv\{e_i,\bar{e}_i|e_i+\bar{e}_i=u\}$ with $i\in\{0,\cdots,4\}$. In both models $e_i$'s are ray extremal effects, whereas $\bar{e}_i$'s are extremal effects of $\mathcal{E}(5)$ but not ray extremal.}\label{fig1}
	\end{figure}
\end{center}
\vspace{-.7cm}

We start with the example of asymmetric local discrimination. For composite quantum systems there exist orthogonal product states that can be perfectly discriminated locally if and only if one of the parties starts the protocol \cite{Groisman01,Walgate02}. For instance, the set  $\{\ket{00}_{AB},\ket{01}_{AB},\ket{1+}_{AB},\ket{1-}_{AB}\}\subset\mathbb{C}^2_A\otimes\mathbb{C}^2_B$ is perfectly discriminable when Alice starts the protocol, but not the other way around. Similar is possible in other generalized probabilistic models. 
\begin{lemma}\label{theo1}
Consider the four states, $\$(4):=\{\omega_0\otimes\omega_0,\omega_0\otimes\omega_3,\omega_1\otimes\omega_0,\omega_2\otimes\omega_1\}$ of the composite system $\mathcal{P}_{ly}(4)^{\otimes 2}$, where all $\omega_i\in\Omega(4)$ (see Fig.\ref{fig1}). This set can be perfectly discriminated locally if and only if Alice starts the protocol.
\end{lemma}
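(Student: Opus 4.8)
The plan is to verify the elementary statistics of $\mathcal{P}_{ly}(4)$ and then treat the two implications of the biconditional separately. Evaluating the Euclidean inner product gives the sharp pattern $p(e_0|\omega_0)=p(e_0|\omega_1)=1$, $p(e_0|\omega_2)=p(e_0|\omega_3)=0$, together with its cyclic rotations for $e_1,e_2,e_3$; consequently the squit carries exactly two sharp binary measurements, $\mathcal{M}_0=\{e_0,e_2\}$ and $\mathcal{M}_1=\{e_1,e_3\}$, each returning a deterministic outcome on every pure state. Writing the four states as $s_1=\omega_0\otimes\omega_0$, $s_2=\omega_0\otimes\omega_3$, $s_3=\omega_1\otimes\omega_0$ and $s_4=\omega_2\otimes\omega_1$, I record the two distinctions that drive everything: $\omega_0$ and $\omega_1$ (Alice's marginals of $s_1,s_3$) are separated only by $\mathcal{M}_1$, while $\omega_0$ and $\omega_3$ (Bob's marginals of $s_1,s_2$) are separated only by $\mathcal{M}_0$.

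For the `Alice starts' direction I would exhibit an explicit two-step protocol. Alice measures $\mathcal{M}_1$: her marginal $\omega_0$ (shared by $s_1,s_2$) yields $e_3$, whereas $\omega_1$ and $\omega_2$ (marginals of $s_3,s_4$) both yield $e_1$, a clean $2$--$2$ split. Since the states are product and the composition is no-signalling, Alice's local action leaves Bob's marginal intact, so conditioned on her broadcast outcome Bob measures exactly once: $\mathcal{M}_0$ to separate $\omega_0$ from $\omega_3$ in the $e_3$ branch, and $\mathcal{M}_1$ to separate $\omega_0$ from $\omega_1$ in the $e_1$ branch. Reading off that every leaf is deterministic establishes perfect discrimination, with neither party ever re-measuring.

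For the `Bob starts' direction I would argue impossibility. First I reduce to sharp measurements: any effect taking a value in $(0,1)$ on a marginal present in the ensemble produces a record reachable with positive probability from two distinct global states, incompatible with zero error, so each party may be taken to use only $\mathcal{M}_0$ or $\mathcal{M}_1$. Computing Bob's three marginals $\omega_0$ (twice), $\omega_3$ and $\omega_1$ under each sharp measurement shows that $\mathcal{M}_0$ induces the partition $\{s_1,s_3,s_4\}\,|\,\{s_2\}$ and $\mathcal{M}_1$ induces $\{s_1,s_2,s_3\}\,|\,\{s_4\}$. Because Bob's repeated marginal $\omega_0$ is grouped with exactly one further state under either measurement, his opening move is forced to be a $3$--$1$ split, in contrast to the $2$--$2$ split available to Alice.

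The remaining, and I expect hardest, step is to show the three-element branch can never be resolved. Tracing every continuation---Bob's first outcome, a possible intervening measurement by Alice, and so on---one finds that completing the discrimination always demands that a single party apply $\mathcal{M}_0$ and then $\mathcal{M}_1$ (or the reverse) to its one subsystem, since the surviving pair is invariably separated only by the measurement complementary to the one already used. This is exactly where the obstruction lives: the two squit measurements are mutually complementary, and whether the re-applied measurement still exposes the needed bit is governed entirely by the post-measurement update, the very object that, as discussed above, no GPT fixes canonically. The crux is therefore to prove that no admissible update rule---repeatable and consistent with Bayesian conditioning---lets a single party extract both complementary bits without error. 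Granting this, the $3$--$1$ branch is irreparable and `Bob starts' fails, whereas the $2$--$2$ protocol never re-measures and so sidesteps the obstruction entirely, yielding the asserted asymmetry.
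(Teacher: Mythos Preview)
Your argument is correct and follows the paper's line: an explicit Alice-first protocol giving a $2$--$2$ split, and a $3$--$1$ obstruction when Bob moves first. Two remarks. First, your effect statistics $p(e_0|\omega_0)=p(e_0|\omega_1)=1$ come from the main-text formula $(2i+1)\pi/n$, whereas the paper's figure, appendix, and the proof of this very Lemma use the shifted convention $(2i-1)\pi/n$; as a result your $\mathcal{M}_0,\mathcal{M}_1$ are swapped relative to the paper's, which is why you have Alice open with $\mathcal{M}_1$ while the paper has her open with $\mathcal{M}_0$. This is a harmless relabelling, but be aware of it when matching against the text.

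Second, your treatment of the Bob-first direction is more elaborate than necessary and leaves a hostage to fortune with the ``granting this'' clause about admissible update rules. The paper's stance, made explicit in the proof of Theorem~\ref{theo2}, is that the polygon models simply carry \emph{no} post-measurement update rule; hence each party acts on its subsystem at most once, and a measurement outcome can only serve to eliminate possibilities. With that in hand the impossibility is immediate: Bob's opening sharp measurement yields a $3$--$1$ split, Alice's single binary measurement cannot resolve the triple, and there is no further move available to either party. You do not need to analyse hypothetical update rules at all---the obstruction is built into the model, not into complementarity per se.
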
               
\begin{proof} Squit model $\mathcal{P}_{ly}(4)$ allows two extremal measurements $\mathcal{M}_0\equiv\{e_0,e_2\}$ and $\mathcal{M}_1\equiv\{e_1,e_3\}$ (see Fig.\ref{fig1}). Alice measures $\mathcal{M}_0$ on her part. Outcome corresponding to the effect $e_0$ ensures that the state is one of the first two states of $\$(4)$ that Bob can discriminate through $\mathcal{M}_1$ measurement on his part, otherwise he performs $\mathcal{M}_0$ and perfetly discriminated the other two states. When Bob starts the protocol, whatever measurement he performs the outcomes divide $\$(4)$ into $1$ vs $3$ groups which is impossible for Alice to perfectly discriminate.
\end{proof}
  
As already mentioned, for $\mathcal{P}_{ly}(4)^{\otimes 2}$ four different self-consistent nontrivial compositions are possible along with the trivial minimal composition $\mathcal{P}_{ly}(4)^{\otimes 2}_{min}\equiv\left(\Omega(4)^{\otimes 2}_{min},\mathcal{E}(4)^{\otimes 2}_{min}\right)$, where both the state space $\Omega(4)^{\otimes 2}_{min}$ and the effect space $\mathcal{E}(4)^{\otimes 4}_{min}$ contain only product states and product effects respectively. Among these, the minimal composition and the HS composition \cite{Dall'Arno17} are Bell local models as they contains no entangled state. However, Lemma \ref{theo1} holds true in all of these five models. We now consider a more exotic phenomena of NWE. In their classic paper \cite{Bennett99} Bennett {\it et al.} provided example of orthogonal product bases for $(\mathbb{C}^3)^{\otimes 2}$ and $(\mathbb{C}^2)^{\otimes 3}$ Hilbert spaces that cannot be perfectly discriminated under LOCC operations. However global separable measurements perfectly discriminate the states. To obtain a similar manifestation in GPT we consider the tripartite system $\mathcal{P}_{ly}(5)^{\otimes 3}$. Likewise $\mathcal{P}_{ly}(4)^{\otimes 2}$, here also it will be interesting to find out all possible self-consistent compositions. However, the minimal composition $\mathcal{P}_{ly}(5)^{\otimes 3}_{min}\equiv\left(\Omega(5)^{\otimes 3}_{min},\mathcal{E}(5)^{\otimes 3}_{min}\right)$ will suffice our purpose. 
\begin{theorem}\label{theo2}
Consider the set of states $\$(5)\equiv\{\phi_1:= \omega_{000},~\phi_2:= \omega_{222},~\phi_3:= \omega_{102},~\phi_4:= \omega_{402},~\phi_5:= \omega_{021},~\phi_6:= \omega_{024},~\phi_7:= \omega_{210},~\phi_8:= \omega_{240}\}\subset\Omega(5)^{\otimes 3}_{min}$, where $\omega_{klm}:=\omega_{k}\otimes\omega_{l}\otimes\omega_{m}$ with each $\omega_i\in\Omega(5)$. This set is perfectly discriminable under global operation whereas no local protocol can discriminate them exactly. 
\end{theorem}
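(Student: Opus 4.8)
The plan is to treat the two halves of the statement separately: global discriminability by an explicit construction, and local impossibility by a disturbance/rigidity argument.

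\textbf{Global discrimination.} I would exhibit a concrete separable measurement. The only orthogonality relation available in $\mathcal{P}_{ly}(5)$ is $p(e_i\,|\,\omega_j)=0$ precisely when $j-i\equiv\pm2\ (\mathrm{mod}\ 5)$, whereas $p(e_i|\omega_i)=1$ and $p(e_i|\omega_{i\pm1})=(\sqrt5-1)/2$; these three numbers exhaust the values a local inner product can take on pure states. For each $\phi_i=\omega_{klm}$ I would assign a product effect $E_i=e^{A}\otimes e^{B}\otimes e^{C}$ whose local factors are pinned to value $1$ on the pure state occupying that slot of $\phi_i$ and $0$ on the orthogonal partner ($\omega_{\cdot\pm2}$) that occurs in that slot in the other seven states. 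Inspection of the eight triples shows that for every $j\neq i$ at least one slot of $E_i$ annihilates $\phi_j$, so $p(E_i|\phi_j)=\delta_{ij}$ and each $\phi_i$ is flagged with certainty. The nontrivial step is to check that $\{E_i\}_{i=1}^{8}$ completes to a legitimate measurement, i.e. $\sum_i E_i\le u\otimes u\otimes u$ with a separable ``no-click'' remainder. Since effects are linear and the state space is the convex hull of pure products, it suffices to verify $\sum_i p(E_i|\omega_x\otimes\omega_y\otimes\omega_z)\le1$ on the finite grid $(x,y,z)\in\{0,\dots,4\}^3$, and the $\mathbb{Z}_3$ cyclic relabelling of the three parties that permutes the eight states cuts this casework by a factor of three.

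Here lies the only real subtlety of the global part. The naive self-dual choice $e_k\otimes e_l\otimes e_m$ already overshoots $u$ (it sums to $(1+\sqrt5)/2$ on $\omega_1^{\otimes3}$), so the local factors must instead be chosen inside the one-parameter pinned family, trading weight between the $\bar e$-type and $e$-type representatives, so that the residual ``tails'' of value $(\sqrt5-1)/2$ never accumulate past $1$ on any grid point. Verifying that such a balanced choice exists is the content of the global claim.

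\textbf{Local impossibility} is the heart of the statement, and the plan there is a rigidity argument. By the cyclic symmetry it is enough to exclude a protocol in which party $A$ makes the first informative move. The marginal ensemble seen by $A$ is $\{\omega_0,\omega_1,\omega_2,\omega_4\}$, and tabulating which pairs are orthogonal in a single party shows that $\phi_3,\phi_4$ can be separated \emph{only} in $A$ (forcing $A$ to distinguish $\omega_1$ from $\omega_4$), while $\phi_1,\phi_7$ and $\phi_2,\phi_5$ can likewise be separated only in $A$ (forcing $A$ to distinguish $\omega_0$ from $\omega_2$). The structural fact about $\mathcal{P}_{ly}(5)$ is that it is a generalized bit: its orthogonality graph is the pentagram, which has no triangle, so no measurement has three mutually distinguishable outcomes and $A$ cannot resolve all four marginals at once. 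Thus $A$ must act over several rounds, and I would then invoke the Bayesian-consistency constraints on GPT update rules (following Chiribella \emph{et al.}): an informative first effect cannot leave the conditional states undisturbed, and in the pentagon geometry any effect that begins to separate one of the orthogonal pairs $\{\omega_0,\omega_2\}$, $\{\omega_1,\omega_4\}$ must collapse the other pair onto non-distinguishable post-states. Hence $A$'s opening either extracts no information (leaving the problem unchanged, so no progress) or destroys a distinguishability that no later party can repair, contradicting perfect discrimination; the same dichotomy holds for $B$ and $C$.

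The step I expect to be the main obstacle is exactly this disturbance lemma: establishing, \emph{for every} admissible update rule rather than a single canonical one, that no nontrivial pentagon measurement preserves two distinct orthogonal pairs. The crude inclusion ``LOCC $\subseteq$ separable measurements'' is of no help, since the separable measurement constructed above \emph{does} discriminate the set; the impossibility must genuinely exploit the adaptive, sequential structure of LOCC together with the rigidity of the pentagon's reversible symmetries, and it must be argued against interleaved multi-round strategies and against all three choices of which party moves first.
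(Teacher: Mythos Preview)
Your global-discrimination plan is on the right track and matches the paper in spirit: a product measurement mixing $e$-type and $\bar e$-type factors is exactly what is needed. But you underestimate how clean the endpoint is. The paper exhibits eight effects (e.g.\ $E_1=e_0\otimes e_0\otimes e_0$, $E_2=\bar e_0\otimes\bar e_0\otimes\bar e_0$, $E_3=e_1\otimes e_0\otimes\bar e_0$, \ldots) for which $\sum_{i=1}^{8}E_i=u\otimes u\otimes u$ on the nose, so no ``no-click'' remainder and no inequality check over the $5^3$ grid is required. Your framing as a one-parameter balancing act to be searched is more work than the problem needs; once you try assigning $e_0$ versus $\bar e_0$ in each slot according to whether that slot carries $\omega_0$ or $\omega_2$, and $e_1$ versus $\bar e_1$ for the $\omega_1/\omega_4$ slots, the sum telescopes exactly.

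The local-impossibility argument is where your plan diverges from the paper and where the real gap lies. The paper does \emph{not} analyse update rules at all. It adopts the operational stance (stated just before the theorem) that in polygonal models there is no well-defined post-measurement state, so a local protocol is simply a decision tree in which each outcome must conclusively eliminate a subset of the eight candidates; the proof is then a finite case analysis over the five extremal measurements $\{e_i,\bar e_i\}$ showing that the best branching leaves the last party with three live candidates, which a single pentagon measurement cannot resolve. Your pair-rigidity observation (that $\{\phi_3,\phi_4\}$ forces $A$ to separate $\omega_1$ from $\omega_4$ while $\{\phi_1,\phi_7\}$ and $\{\phi_2,\phi_5\}$ force $A$ to separate $\omega_0$ from $\omega_2$) is correct and is a nice structural reformulation of the same obstruction, but the disturbance lemma you would need to handle multi-round protocols---``for every admissible update rule, any informative pentagon effect collapses one of the two orthogonal pairs''---is neither proved in your outline nor used by the paper, and in the paper's model it is not even well-posed because no update rule is assumed. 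So either you accept the paper's elimination-tree model of LOCC (in which case your argument collapses to the same case analysis, only organised around pairs rather than branches), or you are attempting a strictly stronger statement that the paper does not claim and that you have not established.
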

\begin{proof}
Symmetry of the states in $\$(5)$ assures that any of the party ( say Alice) can start the local discrimination protocol. Suppose that Alice performs the extremal measurement $\mathcal{M}_0\equiv\{e_0,\bar{e}_0\}$ from such five possible choices $\{\mathcal{M}_i\}_{i=0}^5$ (see Fig. \ref{fig1}). Since, post-measurement update is not well defined in polygonal models, therefore a measurement outcome should either exactly identify the given state or it should conclusively eliminate some possibilities. Alice's outcome $e_0$ ensures that the given state is none of $\{\phi_2,\phi_7,\phi_8\}$. Similarly the outcome $\bar{e}_0$ excludes $\{\phi_1,\phi_5,\phi_6\}$. Having this information Bob and/or Carlie performs suitable measurements on their respective subsystems. At this step one of the outcomes corresponding to the best chosen measurement conclusively eliminates two states. Thus the last party has to identify the state from remaining three states, which is impossible (a flow chart of the protocol is provided in the Appendix). If Alice measures $\mathcal{M}_1\equiv\{e_1,\bar{e}_1\}$, her outcome $e_1$ ($\bar{e}_1$) eliminates only one state $\phi_4$ ($\phi_3$) making the protocol less effective. It is not hard to see that whichever measurement Alice starts with no perfect discrimination is possible.
	
For perfect discrimination, consider the set of effects $\{E_1:= e_{000},~ E_2:= \bar{e}_{000},~E_3:=e_{10},~ \otimes \bar{e}_0,~E_4 := \bar{e}_1 \otimes e_0 \otimes \bar{e}_0,~E_5 =e_0 \otimes \bar{e}_0 \otimes e_1,~E_6 := e_0 \otimes \bar{e}_{01},~E_7 := \bar{e}_0 \otimes e_{10},~E_8: = \bar{e}_{01} \otimes e_0\}\subset\mathcal{E}(5)^{\otimes 3}_{min}$, where $e_{ijk}:=e_i\otimes e_j\otimes e_k$ and $\bar{e}_{lmn}:=\bar{e}_l\otimes \bar{e}_m\otimes \bar{e}_n$, with each $e_r,\bar{e}_s\in\mathcal{E}(5)$. They form a measurement as $\sum_{i=1}^8E_i=u(5)^{\otimes3}$. Straightforward calculation also yields $p(E_i|\phi_j)=\delta_{ij}$ ensuring perfect discrimination of the set $\$(5)$. This concludes the proof. 
\end{proof}
The state discriminable measurement in Theorem \ref{theo2} is a separable measurement (see Definition \ref{def1}). Since no local protocol can perfectly identify the states, therefore this particular separable measurement cannot be implemented locally. Similar construction is also possible in other higher gonal models. Please see the Appendix for explicit constructions in hexagonal and heptagonal models. Regarding the squit model, however, we have the following observation.
\begin{observation}
Construction of nonlocal product states as of Theorem \ref{theo1} is not possible in $\mathcal{P}_{ly}(4)^{\otimes 3}$.
\end{observation}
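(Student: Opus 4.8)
The plan is to establish the impossibility by proving that, in the squit, \emph{every} separable (product-effect) measurement which perfectly discriminates a set of product states can be reorganised into an LOCC protocol; since the phenomenon of Theorem~\ref{theo2} demands a separable measurement that is \emph{not} LOCC-implementable, this settles the claim for $\mathcal{P}_{ly}(4)^{\otimes 3}$. The engine is the \emph{sharpness} of $\mathcal{P}_{ly}(4)$: a direct evaluation of the Euclidean inner product gives $p(e_i|\omega_j)\in\{0,1\}$ for every pure state $\omega_j\in\Omega(4)$ and every extremal effect, while the complementary effects collapse, $\bar{e}_i=u-e_i=e_{i+2}$. Hence a single party admits only the two binary measurements $\mathcal{M}_0\equiv\{e_0,e_2\}$ and $\mathcal{M}_1\equiv\{e_1,e_3\}$, and each pure state is uniquely tagged by its ordered pair of outcomes under $(\mathcal{M}_0,\mathcal{M}_1)$, i.e.\ by two classical bits of which a single local measurement reveals only one at a time.

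First I would restrict to ensembles of pure product states, as in Theorem~\ref{theo2}, and assume a perfect discriminating separable measurement; refining its outcomes into extremal components, each outcome becomes a (possibly scaled) product effect $E_k=\lambda_k\,e^{A}_k\otimes e^{B}_k\otimes e^{C}_k$ tagging a unique state, with $\sum_k E_k=u\otimes u\otimes u$. The key structural step is to read off what normalisation imposes on one party. Evaluating the $B$ and $C$ slots on arbitrary pure states $\beta,\gamma\in\Omega(4)$ gives the single-party identity
\[
\sum_k \lambda_k\,p(e^{B}_k|\beta)\,p(e^{C}_k|\gamma)\,e^{A}_k \;=\; u .
\]
By sharpness each coefficient $p(e^{B}_k|\beta)\,p(e^{C}_k|\gamma)$ lies in $\{0,1\}$, so the surviving weights realise $u$ as a non-negative combination of $e_0,e_1,e_2,e_3$; the balance conditions in the plane then force $\alpha_0=\alpha_2$ and $\alpha_1=\alpha_3$, i.e.\ a convex mixture of $\mathcal{M}_0$ and $\mathcal{M}_1$. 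Hence, conditioned on \emph{any} fixed pure states of $B$ and $C$, party $A$ merely performs a (randomised) legitimate local squit measurement, and symmetrically for $B$ and $C$; the global measurement therefore secretly carries the data of a round-based local protocol.

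The remaining task is to convert this conditional data into an explicit LOCC tree: locate a party whose induced measurement is the same for every assignment of the others' states compatible with the ensemble, let it measure and broadcast first, use the sharp outcome to split the ensemble into two product subensembles, and recurse; since each squit carries one bit of capacity the ensemble has at most $2^{3}=8$ states, so the recursion has bounded depth and terminates in exact local identification. \textbf{The principal obstacle} is guaranteeing such a ``first mover'' at every stage: a priori the induced measurement on each party depends on the unknown states of the others, which could in principle create a deadlock that only a genuinely joint measurement resolves. I would rule this out by a short case analysis over the few inequivalent ensembles permitted by sharpness and the capacity bound — an exhaustive classification of the maximal globally discriminable product sets being feasible given the tiny state and effect menus of $\mathcal{P}_{ly}(4)^{\otimes 3}$ — showing that a consistent ordering always exists. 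It is instructive that here the even/odd-gon distinction is irrelevant but \emph{smallness} is decisive: for $n\ge 5$ the non-ray-extremal effects $\bar{e}_i\neq e_j$ are genuinely unsharp, the displayed normalisation no longer confines the conditional measurements to a finite sharp menu, and the ensuing deadlock — resolved globally but not locally — is exactly the nonlocality without entanglement exhibited by Theorem~\ref{theo2}.
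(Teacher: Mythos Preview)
Your approach is much more elaborate than the paper's. The paper disposes of this Observation in a single sentence: the squit admits no pair of indistinguishable pure states (any two distinct $\omega_i,\omega_j\in\Omega(4)$ are perfectly separated by one of $\mathcal{M}_0,\mathcal{M}_1$), and such local indistinguishability is asserted to be a prerequisite for the nonlocal-product construction of Theorem~\ref{theo2}. No reduction to an LOCC tree is attempted; the argument is essentially the remark that the mechanism driving Theorem~\ref{theo2} --- pure states on one party that remain confused after every local measurement --- is simply unavailable when all pure pairs are sharp.

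You correctly isolate the same engine (sharpness $p(e_i|\omega_j)\in\{0,1\}$ and the collapse $\bar e_i=e_{i+2}$), but then pursue a stronger structural claim the paper never states: that \emph{every} separable perfectly discriminating measurement on $\mathcal{P}_{ly}(4)^{\otimes 3}$ reorganises into an LOCC protocol. The conditional-normalisation step is a genuine observation the paper does not make. However, the ``first-mover'' obstacle you flag is a real gap --- Lemma~\ref{theo1} already shows that in the squit the order of parties can matter, so no symmetry shortcut is available --- and the exhaustive case analysis you propose is not carried out. Note also that your opening reduction (``NWE demands a separable measurement that is not LOCC-implementable'') is automatic only in the minimal composition $\mathcal{P}_{ly}(4)^{\otimes 3}_{min}$; the PR, hybrid and frozen compositions admit entangled effects, so covering those would require an extra argument that global distinguishability of pure product states already forces separable distinguishability. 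In short: your route is more rigorous in intent and would buy a stronger structural statement if completed, but as written it is a sketch with an acknowledged hole, whereas the paper settles for the bare necessity claim and stops there.
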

This follows from the fact that an elementary system of squit does not allow any pair of indistinguishable pure states. In other words, local indistinguishability among pure states is necessary for existence of nonlocal product states of Theorem \ref{theo1}.    

So far we have studied different aspects of NWE in the broader mathematical framework of GPTs. Naturally the question arises how to compare the strength of NWE in different GPTs. To do so, first note that the elementary systems considered in different GPTs must be of {\it same type}. Recall that the phenomena of NWE fundamentally demonstrates difference between local and global operations in extracting classical information encoded in product states. Thus to be in similar footing, different such systems must have same `classical information storage capacity', a notion recently studied for quantum system in Ref. \cite{Weiner15} and generalized for GPTs in Ref. \cite{Dall'Arno17} by the name `signaling dimension'. The concept can be understood with the following communication scenario. Given two finite alphabets $\mathcal{X}=\{x\}$ and $\mathcal{Y}=\{y\}$ of cardinalities $m$ and $n$ respectively, let $\mathcal{P}^{m\rightarrow n}_{S_{ys}}$ denotes the convex set of all $m$-input/$n$-output conditional probability distributions generated by transmitting an elementary system $S_{ys}$ from a sender to a receiver who may have pre-shared randomness. In such a scenario signaling dimension is defined as follows.  
\begin{definition}
(Dall'Arno {\it et al.}, \href{https://doi.org/10.1103/PhysRevLett.119.020401}{PRL {\bf 119}, 020401}) The signaling dimension of a system $S_{ys}$, denoted by $\kappa(S_{ys})$, is defined as the smallest integer d such that $\mathcal{P}^{m\rightarrow n}_{S_{ys}}\subseteq\mathcal{P}^{m\rightarrow n}_{\mathcal{C}_{d}}$, for all	$m$ and $n$. 
\end{definition}
Here, $\mathcal{P}^{m\rightarrow n}_{\mathcal{C}_{d}}$ denotes the set of all $m$-input/$n$-output conditional probability distributions obtained by means of a $d$-dimensional classical noiseless channel and shared random data. Suppose that $\mathcal{S}^1_{ys}$ and $\mathcal{S}^2_{ys}$ are elementary systems of two different theories having identical signaling dimension and $(\mathcal{S}^i_{ys})^{\otimes n}$ be their $n$-partite self-consistent compositions, with $i\in\{1,2\}$. Consider now sets of product states (having same cardinality) in both theses theories that cannot be distinguished locally while respective global measurements can discriminate the states perfectly. The quantity $\Delta[i]:=1-P_L[i]$ amounts to the gap between global and local success probabilities in discriminating the states, where $P_L[i]$ is optimized under all local protocols allowed in the $i^{th}$ theory. If it turns out that $\Delta[1]>\Delta[2]$, then we can say Theory-1 depicts stronger NWE in comparison to Theory-2. In quantum theory both for the systems $(\mathbb{C}^3)^{\otimes 2}$ and $(\mathbb{C}^2)^{\otimes 3}$ we have two different sets of product states with cardinality $8$ that exhibit NWE phenomena \cite{Bennett99}. However, the example of $(\mathbb{C}^2)^{\otimes 3}$ is comparable with that of Theorem \ref{theo2}, since both the elementary systems have same signaling dimension and both examples consider tripartite composite systems. What follows next is the comparison between the strength of NWE in these two examples.  
\begin{theorem}\label{theo3}
$\Delta[\mbox{QT}]\le\frac{1}{8}(4-\sqrt{10})<\frac{1}{8}=\Delta[\mbox{Pentagon}]$.
\end{theorem}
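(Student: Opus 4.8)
The plan is to evaluate the two sides of the displayed chain separately: establish $P_L[\mathrm{Pentagon}]=7/8$ exactly and lower-bound $P_L[\mathrm{QT}]\ge (4+\sqrt{10})/8$, after which the chain is immediate since $\Delta[i]=1-P_L[i]$ and $\sqrt{10}>0$ forces $\tfrac18(4-\sqrt{10})<\tfrac18$. Note the asymmetry of effort: the pentagon value $\Delta[\mathrm{Pentagon}]=1/8$ needs both an explicit protocol and a matching optimality bound, whereas for quantum theory it suffices to exhibit \emph{one} LOCC protocol and compute its success, which only lower-bounds the true optimum $P_L[\mathrm{QT}]$ and hence upper-bounds $\Delta[\mathrm{QT}]$.

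For achievability in the pentagon I would run the protocol implicit in the proof of Theorem~\ref{theo2}. Alice measures $\mathcal{M}_0=\{e_0,\bar{e}_0\}$; using $p(e_0|\omega_0)=1$, $p(e_0|\omega_2)=0$ and $p(e_0|\omega_1)=p(e_0|\omega_4)$, her outcome conclusively splits the eight states into two groups, and single follow-up measurements by Bob and Carlie (adaptively choosing between $\mathcal{M}_0$ and $\mathcal{M}_1$, exploiting that $e_1$ annihilates $\omega_3,\omega_4$) perfectly identify every state except the pair $\phi_3=\omega_{102}$ and $\phi_4=\omega_{402}$. These differ only in Alice's system ($\omega_1$ versus $\omega_4$), which $\mathcal{M}_0$ cannot separate, so they stay perfectly confusable and must be guessed. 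Since they carry combined prior $2/8=1/4$ and a fair guess is correct with probability $1/2$, the six cleanly discriminated states contribute $6/8$ and the confused pair contributes $\tfrac14\cdot\tfrac12=\tfrac18$, giving $P_L[\mathrm{Pentagon}]\ge 7/8$.

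The harder half, and the step I expect to be the main obstacle, is the matching optimality bound $P_L[\mathrm{Pentagon}]\le 7/8$: one must show that \emph{no} local protocol does better. The argument I would give rests on the fact that each elementary pentagon admits only the five dichotomic extremal measurements $\{\mathcal{M}_i\}_{i=0}^{4}$ (consistent with its signaling dimension being two), so a party isolates at most one state per outcome, and that conclusively separating any one of the three twisted pairs $\{\phi_3,\phi_4\}$, $\{\phi_5,\phi_6\}$, $\{\phi_7,\phi_8\}$ forces a specific single measurement on that party which then destroys a separation needed elsewhere. Carrying out the finite case analysis over the first measurement---reduced to a handful of cases by the cyclic party-permutation symmetry of $\$(5)$ together with the $\mathbb{Z}_5$ symmetry of the single-system effects---I expect every strategy to leave some pair of combined weight $1/4$ completely unresolved, capping success at $7/8$ and thereby quantifying the impossibility already asserted in Theorem~\ref{theo2}.

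For quantum theory the plan is to exhibit an explicit LOCC protocol for the Bennett \textit{et al.} $\mathbb{C}^2\otimes\mathbb{C}^2\otimes\mathbb{C}^2$ product basis \cite{Bennett99}, whose combinatorial structure mirrors that of $\$(5)$. The key move, unavailable in the rigid pentagon, is that the first party measures in a \emph{continuously tilted} orthonormal basis $\{\cos\theta\,|0\rangle+\sin\theta\,|1\rangle,\,-\sin\theta\,|0\rangle+\cos\theta\,|1\rangle\}$; because the von Neumann-Lüders rule supplies well-defined post-measurement states, the remaining parties apply the Helstrom-optimal binary measurement to the conditionally prepared (now slightly non-orthogonal) states, trading a small error on the aligned states against improved resolution of the twisted pair. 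Writing the total success as a function $P(\theta)$---a sum of perfect-identification terms and Helstrom terms $\tfrac12\big(1+\sqrt{1-|c(\theta)|^2}\big)$---and maximizing over the single angle $\theta$ (a one-variable optimization whose stationarity condition is algebraic and whose amplitude evaluates to $\sqrt{10}$) yields $\max_\theta P(\theta)=(4+\sqrt{10})/8$. Hence $P_L[\mathrm{QT}]\ge (4+\sqrt{10})/8$, so $\Delta[\mathrm{QT}]\le \tfrac18(4-\sqrt{10})<\tfrac18=\Delta[\mathrm{Pentagon}]$, the strict gap being precisely the operational signature of the continuity of quantum state space emphasized in the introduction.
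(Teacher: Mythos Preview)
Your overall strategy---explicit protocol plus optimality for the pentagon, explicit protocol only for quantum---is exactly the paper's, and your pentagon analysis is essentially identical (indeed your optimality sketch is more detailed than the paper's, which merely refers back to the case analysis of Theorem~\ref{theo2}).

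The quantum part, however, diverges from the paper and rests on a misconception. You write that after Alice's tilted measurement the remaining parties face ``conditionally prepared (now slightly non-orthogonal) states'' and must apply Helstrom-optimal binary discrimination. But the Bennett \emph{et al.} states are \emph{product} states, so Alice's projective measurement on her qubit leaves Bob's and Charlie's qubits untouched; their marginals remain exactly $|0\rangle,|1\rangle,|+\rangle,|-\rangle$ and are not rendered non-orthogonal by the Lüders update. The paper's protocol exploits precisely this: Alice's outcome $|\theta\rangle$ (respectively $|\theta^\perp\rangle$) is used merely to \emph{announce} membership in the group $\mathcal{G}_1=\{\phi_1,\phi_3,\phi_5,\phi_6\}$ (respectively $\mathcal{G}_2=\{\phi_2,\phi_4,\phi_7,\phi_8\}$), and each group is then \emph{perfectly} discriminated by Bob and Charlie using only $\sigma_z/\sigma_x$ measurements. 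All the error sits in Alice's classification step, giving
\[
P_{err}(\theta)=\tfrac{1}{8}\bigl(4-3\cos\theta-\sin\theta\bigr),
\]
minimized to $\tfrac{1}{8}(4-\sqrt{10})$ at $\theta=\arctan(1/3)$. No Helstrom terms of the form $\tfrac12\bigl(1+\sqrt{1-|c(\theta)|^2}\bigr)$ appear anywhere. Your headline number and the ``amplitude $\sqrt{10}$'' are correct, but the mechanism you describe is not the one that produces them; the invocation of von Neumann--L\"uders post-measurement states is a red herring here, since only Alice's classical outcome is communicated and the downstream states are unchanged.
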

Proof of the theorem is provided in Appendix. Here it is worth mentioning that for pentagon model local success probability is optimized over all possible local protocols which consists of $1$-way protocols only. The corresponding quantum value is also obtained under $1$-way LOCC protocol. More general local protocol (consisting $2$-way LOCC) may further decrease the value of $\Delta[\mbox{QT}]$ which in general is difficult to optimize \cite{Croke17}. From the structure of constructions arguably it follows that other polygonal models also have $\Delta=1/8$. Furthermore, we note that instead of the uniform priori distribution of the states $\{\phi_i\}_{i=1}^8$ if one consider a biased prior distribution (a more feasible situation for the experimental purpose) then also limited NWE behavior of quantum theory can be established (see the Appendix). Importantly, the continuity of the quantum state space plays the crucial role in resulting limited NWE behavior compared to the discrete polygonal models.

One may ask for the GPT analog of the nonlocal product bases in $(\mathbb{C}^3)^{\otimes 2}$ \cite{Bennett99}. We have a negative impression at this point that such analogy will not be possible in bipartite composition of the polygonal models. This is due to the fact that all the polygonal models have signaling dimension $2$ whereas that of the qutrit system is $3$. Of course a rigorous mathematical proof of this intuition will be worth interesting. Such a result will generalize the Theorem $4$ of Ref. \cite{Walgate02} in GPT framework.        
    
{\it Discussion.--} In this work we study the `nonlocality without entanglement' phenomena in the broader mathematical framework of generalized probabilistic theories. We show that this particular nonclassical behavior in quantum theory is limited as compared to other GPT models. This, in a sense, is similar to the fact of limited Bell nonlocality in quantum theory as observed by Rohrlich and Popescu in their seminal work \cite{Popescu94}. In fact, Rohrlich-Popescu proposed to axiomatized this limited Bell nonlocal behavior (along with relativistic causality) to derive quantum theory. Subsequently, several information and physical principles have been proposed to explain limited Bell nonlocality in quantum theory \cite{vanDam,Brassard06,Linden07,Pawlowski09,Navascues09,Fritz13}, and its connection with other quantum features have also been established \cite{Oppenheim10,Banik13,Stevens14,Banik15}. In our work, we observe that the limited nonlocality without entanglement feature in quantum theory is owing to the continuity of quantum state space structure which is presumed in axiomatic derivation of quantum theory either directly \cite{Hardy01}, or invoked through other assumptions such as `reversible transformation' \cite{Masanes11} or `purification' \cite{Chiribella11}. However, the present work shows that the same feature can be obtained as an emerging fact if we presume the limited NWE as a fundamental characteristic of the theory. It therefore welcomes novel information/physical primitive(s) to explain this limited NWE behavior in quantum theory. 

Our work also motivates further research to study other exotic features of NWE phenomena in GPT framework. For instance, the phenomena of NWE in quantum theory was first anticipated by Peres and Wootters \cite{Peres91}. They conjectured that LOCC measurements are suboptimal for discrimination of a specific set of nonorthogonal product states (see also \cite{Massar95}). More recently, the authors in Ref.\cite{Chitambar13} revisited the classic problem of Peres and Wootters and proved that there conjecture is indeed true. A similar example in GPT framework is yet to be obtained. On the other hand, multipartite generalization of NWE phenomena have been studied very recently \cite{Halder19,Rout19}. Similar study in the GPT framework may provide further insight about the structure of quantum theory.              

\begin{acknowledgements}
This work is supported by the National Natural Science Foundation of China through grant 11675136, the Foundational Questions Institute through grant FQXi-RFP3-1325, the Hong Kong Research Grant Council through grant 17300918, and the John Templeton Foundation through grant 60609, Quantum Causal Structures. The opinions expressed in this publication are those of the authors and do not necessarily reflect the views of the John Templeton Foundation. MB acknowledges support through an INSPIRE-faculty position at S. N. Bose National Center for Basic Sciences by the Department of Science and Technology, Government of India. We gratefully acknowledge discussions with Guruprasad Kar, Saronath Halder, Amit Mukherjee, Arup Roy, Mir Alimuddin, and Sumit Rout. We would like to thank Amit Mukherjee for his help with the Figure \ref{fig3} and Saronath Halder for pointing out the Reference \cite{Croke17}. We thankfully acknowledge the useful suggestions by Giulio Chiribella.   
\end{acknowledgements}

\onecolumngrid
\appendix
\section{Generalized probabilistic theories}
\subsection{Preliminaries} 
{\bf (A) State space}: State space $\varOmega$ of a system $S_{ys}\equiv(\varOmega, \mathcal{E})$ is a convex-compact set embedded in the positive cone $V_+$ of some real vector space $V$. Convexity assures statistical mixture of different valid preparations as another valid preparation, {\it i.e.}, for $\omega_1,\omega_2\in\varOmega$ any probabilistic mixture $p\omega_1+(1-p)\omega_2\in\varOmega$, where $p\in[0,1]$. $\varOmega$ is considered to be topologically closed, {\it i.e.}, there is no physical distinction between states that can be prepared
exactly and states that can be prepared to arbitrary accuracy. Furthermore finite dimensionality of $V$ guarantees compactness of $\varOmega$. 

{\bf (B) Effect space}: Effects are linear functional on $\Omega$ that maps each state onto a probability. The set of all linear functionals is as $\Omega^*\subset V_+^*$. Framework of GPTs may assume, a priori, that all mathematically well-defined states and observables are not physically implementable. For example, the set of physically allowed effects $\mathcal{E}$ may be a strict subset of $\Omega^*$. A theory in which all elements of $\Omega^*$ are allowed effects is called ‘dual’. The property of duality is often assumed as a starting point in derivations of quantum theory and referred to as the `no-restriction hypothesis' \cite{Chiribella11}. However, recently it has been shown that the set of `almost-quantum correlations' violates the no-restriction hypothesis \cite{Sainz18}. A $d$-outcome measurement $M$ is specified by a collection of $d$ effects, {\it i.e.}, $M\equiv\{\mathit{e}_j~|~\sum_je_j=u\}$. In GPT framework the notion of distinguishability is defined in the following sense.
\begin{definition}
Members of a set of $n$ states $\{\omega_i\}_{i=1}^n\subset\varOmega$ are called distinguishable if they can be perfectly identified in a single shot measurement {\it i.e.}, if there exists an $n$-outcome measurement $M=\{e_j~|~\sum_{j=1}^ne_j=u\}$ such that $p(\mathit{e}_j|w_i)=\delta_{ij}$.
\end{definition}

{\bf (C) transformation}: Transformation maps states into state (also effects into effects), {\it i.e.}, $T:V\mapsto V$, with $T(V_+)\subseteq V_+$. They are linear and preserve statistical mixtures. They cannot increase the total probability, but are allowed to decrease it.

{\bf (D) Joint system}: For subsystems $S_{ys}(A)\equiv(\varOmega^A, \mathcal{E}^A)$ and $S_{ys}(B)\equiv(\varOmega^B, \mathcal{E}^B)$ a GPT also specifies their composition $S_{ys}(AB)\equiv(\varOmega^{AB}, \mathcal{E}^{AB})$. Clearly $\varOmega^{AB}$ is convex by definition. In general, one can imagine many weird and wonderful relations among $\varOmega^{A}$, $\varOmega^{B}$, and $\varOmega^{AB}$. One can, however, narrow down the possibilities by imposing the following quite natural conditions:
\begin{itemize}
	\item[(i)] Ever joint state $\omega_{AB}\in\varOmega^{AB}$ should assign joint probability to pair of effects $(e_A, e_B)$; $e_A\in \mathcal{E}^A$ and $e_B\in \mathcal{E}^B$.
	\item[(ii)] Joint probabilities must respect the no-signaling condition, {\it i.e.}, the marginal outcome probabilities of the measurements on $B$ should not depend on the measurement choice on $A$ and vice versa.
	\item[(iii)] The joint probabilities for the pairs of effects $(e_A, e_B)$ specifies the joint state. This condition is known as \emph{local tomography} condition \cite{Hardy01}. The real-vector-space quantum theory does not satisfy this condition \cite{Hardy12}.
\end{itemize}
Any joint state space $\varOmega^{AB}$ satisfying the aforesaid requirements is embedded in the positive cone of $V_A\otimes V_B$. Furthermore, it must lie between two
extremes, the \emph{maximal} and the \emph{minimal tensor product} \cite{Namioka69,Wilce92,Barnum07}.
\begin{definition}
The maximal tensor product $\varOmega^A\otimes_{max}\varOmega^B$, is the set of all bilinear functionals $\phi:~V^*_A\otimes V^*_B\rightarrow \mathbb{R}$ such that (i) $\phi(e_A,e_B)\ge 0$ for all $e_A\in \mathcal{E}^{A}$ and $e_B\in \mathcal{E}^{B}$ and (ii) $\phi(u_A,u_B)=1$, where $u_A$ and $u_B$ are unit effects for system $A$ and $B$ respectively. The maximal tensor product has an important operational characterization: it is the largest set of states assigning probabilities to all product measurements but not allowing signaling.
\end{definition}
\begin{definition}
The minimal tensor product $\varOmega^A\otimes_{min}\varOmega^B$, is the convex hull of the product states, where a product $\omega_A\otimes\omega_B$ is defined by $p(a,b|\omega_A\otimes\omega_B):=p(a|\omega_A)p(b|\omega_B)$ for all pairs of $(a,b)\in V^*_A\otimes V^*_B$.
\end{definition}
\subsection{Polygon theories}
{\bf (A) Elementary systems:} We denote the system as $\mathcal{P}_{ly}(n)\equiv(\Omega(n),\mathcal{E}(n))$. For a fixed $n$, $\Omega(n)$ is the convex hull of $n$ pure states $\{\omega_i\}_{i=0}^{n-1}$ with $\omega_i:=\left(r_n \cos(2 \pi i/n), r_n \sin(2 \pi i/n),1 \right)^T\in\mathbb{R}^3$; where $T$ denotes transpose and $r_n:= \sqrt{\sec(\pi/n)}$. The unit effect is given by $u:=(0,0,1)^T$. The set $\mathcal{E}(n)$ is the convex hull of zero effect, unit effect, and the extremal effects $\{e_i,\bar{e}_i\}_{i=0}^{n-1}$, where $e_i:= \frac{1}{2}\left(r_n \cos((2 i-1) \pi/n),r_n \sin((2 i-1) \pi/n),1\right)^T$ for even $n$ and $e_i:=\frac{1}{1 + {r_n}^2}\left(r_n \cos(2 \pi i/n),r_n \sin(2 \pi i/n),1\right)^T$ for odd $n$.

The pure effects $\{e_i\}_{i=0}^{n-1}$ correspond to exposed rays and consequently the extreme rays of $V_+^*$ \cite{Yopp07}. For odd-gonal cases, due to self-duality of state cone $V_+$ and its effect cone $V_+^*$ \cite{Kimura14} every pure effect $e_i$ has one to one ray-correspondence to the pure state $\omega_i$. Consequently, for every pure state $\omega_i$ there exist exactly two other pure states $\omega_i^\prime$ and $\omega_i^{\prime\prime}$ such that $\omega_i$ and $\bar{\omega}_i^{(\eta)}:=\eta\omega_i^\prime+(1-\eta)\omega_i^{\prime\prime}$ are always perfectly distinguishable for all $\eta\in[0,1]$. The discriminating measurement consists of the effects $\{e_i,\bar{e}_i\}$ such that $p(e_i|\omega_i)=1$ and $p(e_i|\bar{\omega}_i^{(\eta)})=0$. The effects $\{\bar{e}_i\}_{i=0}^{n-1}$ are extremal elements of $\mathcal{E}(n)$ but they are not ray extremal, i.e., they do not lie on an extremal ray of the cone $V_+^*$. For an even-gon, the scenario is quite different as the self duality between $V_+$ and $V_+^*$ is absent. Here all the $e_i$'s and their complementary effects $\bar{e}_i$'s correspond to extreme rays of $V_+^*$.

\section{Technical results}
\subsection{NWE in other polygonal models}
Construction as of Proposition-3 is also possible in other polygonal models. Explicit constructions are given for hexagonal and heptagonal models.
\begin{center}
	\begin{figure}[h!]	
	\begin{tikzpicture}[scale=2.2]
	\newdimen\R
	\R=1cm
	\draw[->](0,0)--(1.2,0) node[right] {\large $X$};
	\draw[->](0,1)--(0,1.2) node[above] {\large $Y$};
	\draw [blue!100,<->,thick](0:\R) \foreach \x in {0,60,...,300} {-- (\x:\R)}
	-- cycle (1.1,0) node[below] {\large $\omega_0$}
	-- cycle (60:\R) node[right] {\large $\omega_1$}
	-- cycle (120:\R) node[left] {\large $\omega_2$}
	-- cycle (-1.1,0) node[below] {\large $\omega_3$}
	-- cycle (240:\R) node[left] {\large $\omega_4$}
	-- cycle (300:\R) node[right] {\large $\omega_5$};
	\foreach \x in {0,60,...,300} {\draw[ball color=blue] (\x:1 cm) circle (.02cm);}
	\draw [red!100,<->,thick] (30:\R) \foreach \x in {30,90,...,330} {-- (\x:\R)}
	-- cycle (330:\R) node[right] {\large $e_0$}
	-- cycle (30:\R) node[above] {\large $e_1$}
	-- cycle (90:\R) node[right] {\large $e_2$}
	-- cycle (150:\R) node[left] {\large $e_3$}
	-- cycle (210:\R) node[left] {\large $e_4$}
	-- cycle (270:\R) node[below] {\large $e_5$};
	\foreach \x in {30,90,...,330} {\draw[red!100,<->,thick,dashed] (\x:.98 cm) -- (\x + 180:.98 cm);
		\draw[ball color=red] (\x:1 cm) circle (.02cm);}
	
	\draw[xshift=4\R][->](1,0)--(1.2,0) node[right] {\large $X$};
	\draw[xshift=4\R][->](0,0)--(0,1.2) node[above] {\large $Y$};
	\draw[xshift=4\R] [blue!100,<->,thick](0:\R) \foreach \x in {0,51.428,...,308.571} {-- (\x:\R)}
	-- cycle (1.1,0) node[below] {\large $\omega_0$}
	-- cycle (51.428:\R) node[above] {\large $\omega_1$}
	-- cycle (102.857:\R) node[above] {\large $\omega_2$}
	-- cycle (154.286:\R) node[left] {\large $\omega_3$}
	-- cycle (205.714:\R) node[left] {\large $\omega_4$}
	-- cycle (257.143:\R) node[below] {\large $\omega_5$}
	-- cycle (308.571:\R) node[below] {\large $\omega_6$};
	\foreach \x in {0,51.428,...,308.571} {
		\draw [xshift=4\R] [red!100,<-,thick](\x:.98 cm) -- (\x + 180:0 cm);
		\draw [xshift=4\R] [red!100,->,thick,dashed](\x:0 cm) -- (\x + 180:.89 cm);
		\draw [xshift=4\R] [ball color=blue] (\x:1 cm) circle (.02cm);}
	\node[text=red] at (4.8,.1) {\large $e_0$};
	\node[text=red] at (3.3,.1) {\large $\bar{e}_0$};
	\end{tikzpicture}
		\caption{[Color on-line] Polygonal models $\mathcal{P}_{ly}(6)$ (left) and $\mathcal{P}_{ly}(7)$ (right). In $\mathcal{P}_{ly}(6)$, each extremal effect  exactly filters two extremal states perfectly whereas does not filter another two extremal states at all and filters other pure states probabilistically. For instance, for the effects $e_1$,  $p(e_1|\omega_0)=p(e_1|\omega_0)=1$, $p(e_1|\omega_3)=p(e_1|\omega_4)=0$, and $p(e_1|\omega_2)=p(e_1|\omega_5)\in(0,1)$. On the other hand, in heptagon model, each ray-extremal effect perfectly filters only one pure state, and does not filter two pure states at all, and rest pure states are filtered probabilistically. For example, $p(e_0|\omega_0)=1$, $p(e_0|\omega_3)=p(e_0|\omega_4)=0$, and $p(e_0|\omega_i)\in(0,1)$ for $i\in\{1,2,5,6\}$. For the non-ray extremal effects the scenario is bit different, $p(\bar{e}_0|\omega_3)=p(\bar{e}_0|\omega_4)=1$, $p(\bar{e}_0|\omega_0)=0$, and $p(\bar{e}_0|\omega_i)\in(0,1)$ for $i\in\{1,2,5,6\}$.}\label{fig2}
	\end{figure}
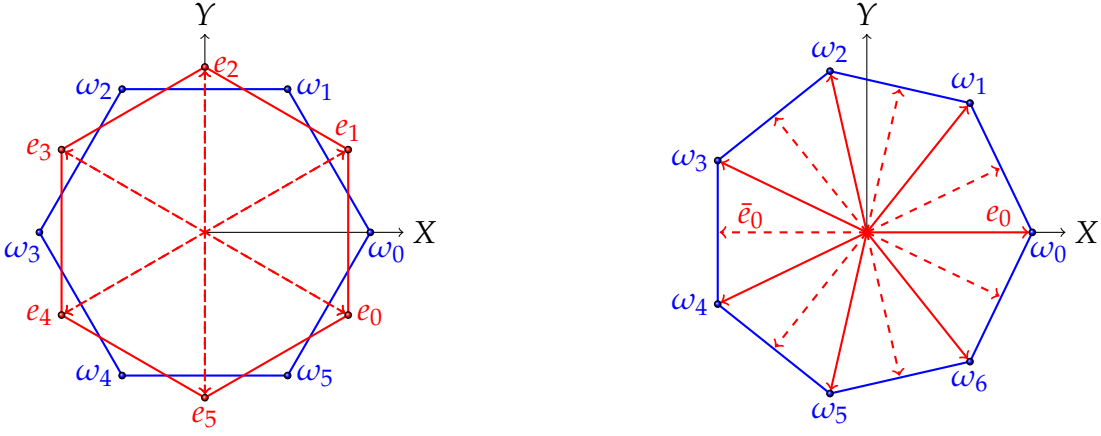
\end{center}

{\bf (a)} Construction in $\mathcal{P}_{ly}(6)^{\otimes 3}_{min}\equiv\left(\Omega(6)^{\otimes 3}_{min},\mathcal{E}(6)^{\otimes 3}_{min}\right)$: the set of states are given by,
\begin{align}\label{hexas}
\rotatebox[origin=c]{0}{$\$(6)\equiv$}
\left\{\!\begin{aligned}
\phi_1:= \omega_{0}\otimes\omega_{0}\otimes\omega_{0},~~~\phi_2:= \omega_{3}\otimes\omega_{3}\otimes\omega_{3},~~~
\phi_3:= \omega_{1}\otimes\omega_{0}\otimes\omega_{3},~~~\phi_4:= \omega_{5}\otimes\omega_{0}\otimes\omega_{3},\\
\phi_5:= \omega_{0}\otimes\omega_{3}\otimes\omega_{1},~~~\phi_6:= \omega_{0}\otimes\omega_{3}\otimes\omega_{5},~~~
\phi_7:= \omega_{3}\otimes\omega_{1}\otimes\omega_{0},~~~\phi_8:= \omega_{3}\otimes\omega_{5}\otimes\omega_{0}~
\end{aligned}\right\},
\end{align}
where each $\omega_i\in\Omega(6)$. Local indistinguishability follows like in Proposition-3. The separable measurement that perfectly discriminate this set is given by,    
\begin{align}\label{hexam}
\rotatebox[origin=c]{0}{}
\left\{\!\begin{aligned}
E_1 = e_0 \otimes e_0 \otimes e_0,~~~E_2 = e_3 \otimes e_3 \otimes e_3,~~~E_3 =e_1 \otimes e_0 \otimes e_3,~~~E_4 = e_4 \otimes e_0 \otimes e_3,\\
E_5 =e_0 \otimes e_3 \otimes e_1,~~~E_6 = e_0 \otimes e_3 \otimes e_4,~~~E_7 = e_3 \otimes e_1 \otimes e_0,~~~E_8 = e_3 \otimes e_4 \otimes e_0~
\end{aligned}\right\},
\end{align}
where each $e_i\in\mathcal{E}(6)$. It is a straightforward observation that $p(E_i|\phi_j)=\delta_{ij}$ (see Fig.\ref{fig2}). 

{\bf (b)} Construction in $\mathcal{P}_{ly}(7)^{\otimes 3}_{min}\equiv\left(\Omega(7)^{\otimes 3}_{min},\mathcal{E}(7)^{\otimes 3}_{min}\right)$: the set of states are given by,
\begin{align}\label{heptas}
\rotatebox[origin=c]{0}{$\$(7)\equiv$}
\left\{\!\begin{aligned}
\phi_1:= \omega_{0}\otimes\omega_{0}\otimes\omega_{0},~~~\phi_2:= \omega_{3}\otimes\omega_{3}\otimes\omega_{3},~~~
\phi_3:= \omega_{1}\otimes\omega_{0}\otimes\omega_{3},~~~\phi_4:= \omega_{5}\otimes\omega_{0}\otimes\omega_{3},\\
\phi_5:= \omega_{0}\otimes\omega_{3}\otimes\omega_{1},~~~\phi_6:= \omega_{0}\otimes\omega_{3}\otimes\omega_{5},~~~
\phi_7:= \omega_{3}\otimes\omega_{1}\otimes\omega_{0},~~~\phi_8:= \omega_{3}\otimes\omega_{5}\otimes\omega_{0}~
\end{aligned}\right\},
\end{align}
where each $\omega_i\in\Omega(7)$. Local indistinguishability follows like in Proposition-3. The separable measurement that perfectly discriminate this set is given by,    
\begin{align}\label{heptam}
\rotatebox[origin=c]{0}{}
\left\{\!\begin{aligned}
E_1 = e_0 \otimes e_0 \otimes e_0,~~~E_2 = \bar{e}_0 \otimes \bar{e}_0 \otimes \bar{e}_0,~~~E_3 =e_1 \otimes e_0 \otimes \bar{e}_0,~~~E_4 = \bar{e}_1 \otimes e_0 \otimes \bar{e}_0,\\
E_5 =e_0 \otimes \bar{e}_0 \otimes e_1,~~~E_6 = e_0 \otimes \bar{e}_0 \otimes \bar{e}_1,~~~E_7 = \bar{e}_0 \otimes e_1 \otimes e_0,~~~E_8 = \bar{e}_0 \otimes \bar{e}_1 \otimes e_0~
\end{aligned}\right\},
\end{align}
where each $e_i,\bar{e}_j\in\mathcal{E}(7)$. It is again straightforward to check that $p(E_i|\phi_j)=\delta_{ij}$ (see Fig.\ref{fig2}).

\subsection{Strength of NWE in different theories}
\begin{figure}[h!]
	\centering
	\includegraphics[width=.8\textwidth]{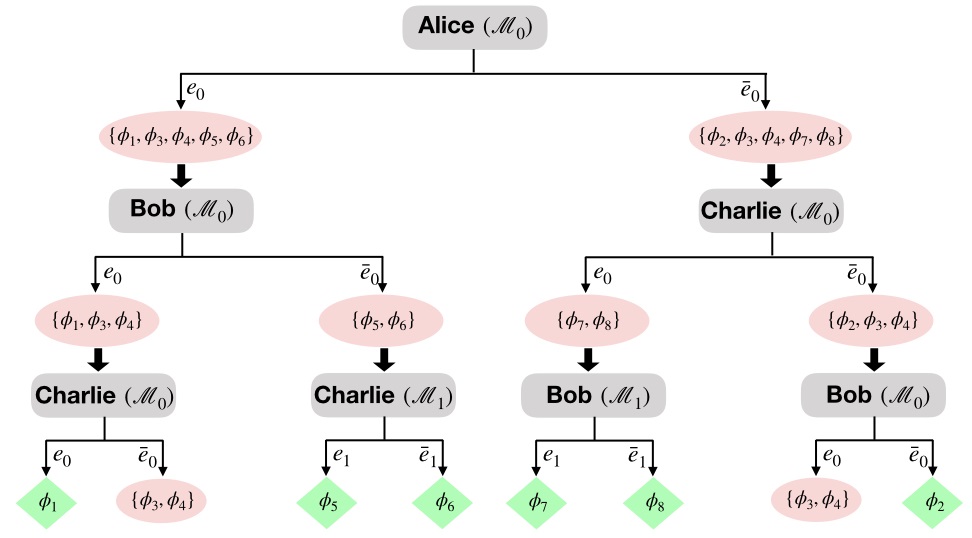}
	\caption{[Color on-line] The optimal local protocol for distinguishing the set $\$(5)$. In the last step, if there is a hit in square shaped box the state is conclusively identified, else ambiguity arises. This flow chart of optimal local discrimination looks similar for the other polygonal models.}\label{fig3}
\end{figure}
{\bf Pentagon model:} We are interested in optimal success probability of local discrimination of the states,
\begin{align}\label{pent}
\rotatebox[origin=c]{0}{$\$(5)\equiv$}
\left\{\!\begin{aligned}
\phi_1:= \omega_{0}\otimes\omega_{0}\otimes\omega_{0},~~~\phi_2:= \omega_{2}\otimes\omega_{2}\otimes\omega_{2},~~~
\phi_3:= \omega_{1}\otimes\omega_{0}\otimes\omega_{2},~~~\phi_4:= \omega_{4}\otimes\omega_{0}\otimes\omega_{2},\\
\phi_5:= \omega_{0}\otimes\omega_{2}\otimes\omega_{1},~~~\phi_6:= \omega_{0}\otimes\omega_{2}\otimes\omega_{4},~~~
\phi_7:= \omega_{2}\otimes\omega_{1}\otimes\omega_{0},~~~\phi_8:= \omega_{2}\otimes\omega_{4}\otimes\omega_{0}~
\end{aligned}\right\}.
\end{align}
As discussed in the proof of Proposition-2, due to party symmetry any party can start the local discrimination protocol. Starting with Alice's measurement $\mathcal{M}_0=\{e_0,\bar{e}_0\}$, the full discrimination protocol is diagrammatically depicted in Fig \ref{fig3}. The above protocol shows that out of $8$ states $6$ can be discriminated perfectly whereas confusion arises for the other $2$ states. Since the states are given with uniform random probability, the the success probability turns out to be,
\begin{eqnarray}
P^{succ}_L=\frac{1}{8}\left( 6\times 1+2\times\frac{1}{2}\right)=\frac{7}{8}. 
\end{eqnarray}
It follows from the proof of Proposition-2, any other strategy is no good for yielding a greater success probability. Therefore we have,
\begin{equation}
\Delta[\mbox{Pentagon}]=1-\frac{7}{8}=\frac{1}{8}. 
\end{equation} 
It is not hard to see that in other polygonal model also we have $\Delta=\frac{1}{8}$.
\begin{center}
	\begin{figure}[b!]
		\begin{tikzpicture}[scale=1.5]
		\draw [thick,blue](0,0) circle (1cm);
		\draw [thick,blue](1,0) node[right]{$\large{\ket{+}}$}--(-1,0)node[left]{$\ket{-}$};
		\draw [thick,blue](0,1)node[above]{$\ket{0}$}--(0,-1)node[below]{$\ket{1}$};
		\draw [thick,red](250:1cm)node[below]{$\ket{\theta^\perp}$}--(70:1cm)node[above]{$\ket{\theta}$};
		\draw[thick,black] (70:.45cm) arc (70:90:13pt);
		\node at (80:.6cm) {$\theta$};
		\end{tikzpicture}
		\caption{[Color on-line] $XZ$ plane of the Bloch sphere. Instead of measuring in computation basis, Alice performs a measurement in $\{\ket{\theta},\ket{\theta^\perp}\}$ basis. Error gets minimized at $\theta=\tan^{-1}\left( \frac{1}{3}\right) $.}\label{fig4}
	\end{figure}
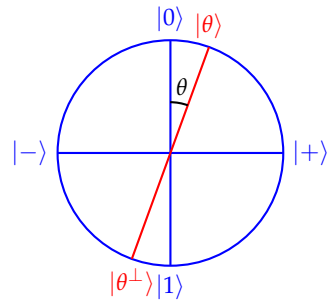
\end{center}

{\bf Quantum theory:} A similar type example in quantum theory is provided by Bennett {\it el al.} in 3-qubit system \cite{Bennett99}. The set of states are given by,
\begin{align}\label{quantum}
\rotatebox[origin=c]{0}{$\mathcal{Q}\equiv$}
\left\{\!\begin{aligned}
\phi_1:= \tri{0}{0}{0},~~~\phi_2:= \tri{1}{1}{1},~~~
\phi_3:= \tri{+}{0}{1},~~~\phi_4:= \tri{-}{0}{1},\\
\phi_5:= \tri{0}{1}{+},~~~\phi_6:= \tri{0}{1}{-},~~~
\phi_7:= \tri{1}{+}{0},~~~\phi_8:= \tri{1}{-}{0}~
\end{aligned}\right\},
\end{align}
where $\{\ket{0},\ket{1}\}$ are the eigenstates of Pauli $\sigma_z$ operator and $\ket{\pm}:=(\ket{0}\pm\ket{1})/\sqrt{2}$. In general it is very hard to characterize the set of LOCC operation in quantum theory \cite{Chitambar14}. It is also difficult to find out the optimal discrimination probability under such protocols \cite{Croke17}. So for discriminating the set $\mathcal{Q}$, we first consider 1-way LOCC protocol, where one of the party starts the protocol and based on her/his result one of the remaining two parties do some local operation and finally the third party performs local operation and try to guess the state. 

Due to symmetry in construction, for the set $\mathcal{Q}$, any of the parties can start the protocol. If Alice starts by performing a measurement in Pauli $\sigma_z$ basis and they follow a protocol like in the pentagon model then the success probability turns out to be $7/8$ (to see it replace $\mathcal{M}_0$ by $\sigma_z$ and $\mathcal{M}_1$ by $\sigma_x$ in the flow chart \ref{fig3}). Instead of this let Alice perform a measurement in $\{\ket{\theta},\ket{\theta^\perp}\}$ basis (see Fig \ref{fig4}). Based on her outcome $\ket{\theta}$ and $\ket{\theta^\perp}$ she divides the states into two groups $\mathcal{G}_1\equiv\{\phi_1,\phi_3,\phi_5,\phi_6\}$ and $\mathcal{G}_2\equiv\{\phi_2,\phi_4,\phi_7,\phi_8\}$, respectively and informs her outcome to Bob and Charlie. Note that both these groups are perfectly local discriminable by Bob and Charlie. Therefore the error occurs at Alice's step only. The error can occur in two ways, the state $\ket{\theta}$ may clicks even when the state is from $\mathcal{G}_2$ and similarly $\ket{\theta^\perp}$ may clicks even when the state is from $\mathcal{G}_1$. The total error is thus given by,
\begin{equation}
P_{err}(\theta)=\frac{1}{8}\times 2\left(3\times\frac{1}{2}(1-\cos\theta) +\frac{1}{2}(1-\sin\theta)\right).
\end{equation}  
Optimizing over $\theta$, we obtain $P_{err}=\frac{1}{8}(4-\sqrt{10})$ at $\theta=\tan^{-1}\left( \frac{1}{3}\right) $, and thus $P^{succ}_L=1-P_{err}$ which subsequently yield,
\begin{equation}
\Delta[\mbox{Quantum}]\le\frac{1}{8}(4-\sqrt{10})<\frac{1}{8}=\Delta[\mbox{Pentagon}].
\end{equation} 
\subsection{Ensemble with biased probability distribution}
\begin{figure}[b!]
	\centering
	\includegraphics[width=.5\textwidth]{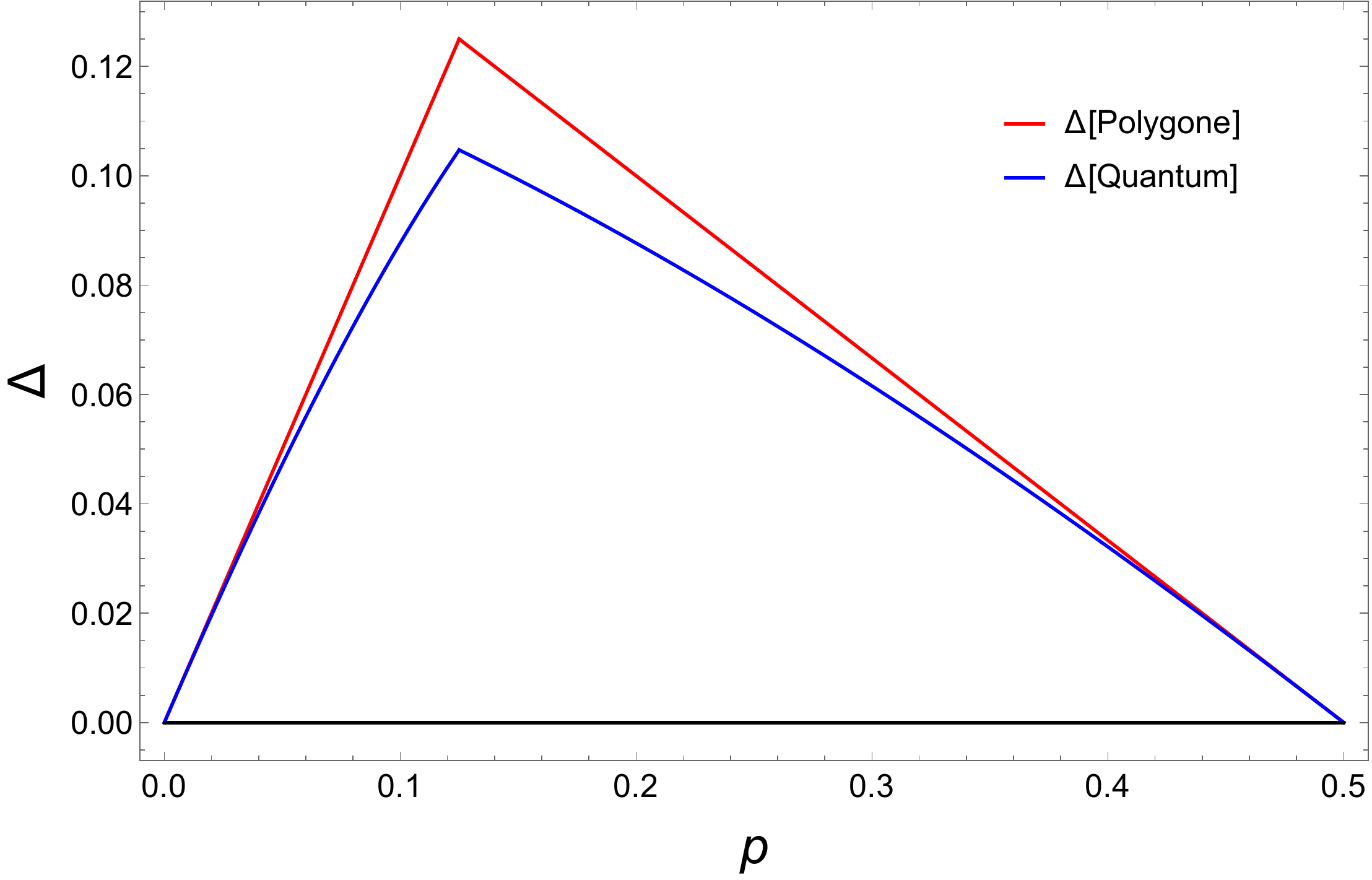}
	\caption{[Color on-line] For $0<p\le\frac{1}{8}$, protocol (a) is advantageous whereas $\frac{1}{8}\le p<\frac{1}{2}$, protocol (b) is advantageous. We have $\Delta[\mbox{Quantum}]<\Delta[\mbox{polygon}]$ for $p\in\left( 0,1/2\right) $.}\label{fig5}
\end{figure}
In the above study we have considered that the set of states $\{\phi_i\}_{i=1}^8$ in Eqs.(\ref{hexas}),(\ref{heptas}),(\ref{pent}), and (\ref{quantum}) are given with uniform probability distribution. This is a true idealistic demand in practical purpose. Here we assume that the states are chosen with a biased probability distribution, we consider ensemble $\{p_i,\phi_i\}_{i=1}^8$ such that $p_i>0~\forall~i~\&~\sum_{i=1}^8p_i=1$. 

Suppose that $p_4=p_4=p$ and rests are $(1-2p)/6$. As discussed earlier, due to the party symmetric nature of the set $\{\phi_i\}_{i=1}^8$, in uniform distribution case any of the party can start the local discrimination protocol. However, for biased scenario the optimal success probability differs depending on the fact which party starts the protocol. In this case we can have two different protocols: (a) Alice starts the protocol, (b) Bob/Charlir starts the protocol.

{\bf Polygon theories:} Straightforward calculations provide, 
\begin{itemize}
	\item[(a)] $P^{succ}_L[\mbox{polygon}]=1-p$ and consequently $\Delta[\mbox{polygon}]=p$,
	\item[(b)] $P^{succ}_L[\mbox{polygon}]=1-\frac{1-2p}{6}$ and consequently $\Delta[\mbox{polygon}]=\frac{1-2p}{6}$.
\end{itemize}
For $0<p\le \frac{1}{8}$ the protocol (a) is advantageous whereas for $\frac{1}{8}\le p<\frac{1}{2}$ the protocol (b) turns out to be superior.

{\bf Quantum theory:}
\begin{itemize}
	\item[(a)] If Alice follows a protocol as of Fig \ref{fig4}, then we have
	$$\Delta[\mbox{Quantum}]=\frac{1}{2}\left( 1-\sqrt{1-4p+8p^2}\right),~\mbox{achieved at}~\theta=\tan^{-1}\left( \frac{2p}{1-2p}\right).$$ 
	\item[(b)] If Bob (or Charlie) follows analogous protocol then we obtain
	$$\Delta[\mbox{Quantum}]=\frac{1}{2}\left( 1-\frac{1}{3}\sqrt{5+4p+8p^2}\right),~\mbox{achieved at}~\theta=\tan^{-1}\left( \frac{1-2p}{2(1-p)}\right).$$
\end{itemize}
Interestingly, for the whole range $0<p<\frac{1}{2}$, we have $\Delta[\mbox{Quantum}]<\Delta[\mbox{polygon}]$ (see Fig \ref{fig5}).

\twocolumngrid

\end{document}